\title{Dependency Stochastic Boolean Satisfiability:\\
\Large A Logical Formalism for NEXPTIME Decision Problems with Uncertainty\thanks{A condensed version of this work is published in the AAAI Conference on Artificial Intelligence (AAAI) 2021.}}
\author {
        Nian-Ze Lee,\textsuperscript{\rm 1}
        Jie-Hong R. Jiang\textsuperscript{\rm 1, 2}\\
}
\newcommand{\invR}{\begin{sideways}
\begin{sideways}$\mathsf{R}$\end{sideways}\end{sideways}}
\DeclareMathOperator*{\argmax}{arg\,max}
\newtheorem{theorem}{Theorem}
\newtheorem{corollary}{Corollary}
\newtheorem{example}{Example}
\begin{document}

\maketitle

\begin{abstract}
    \textit{Stochastic Boolean Satisfiability} (SSAT) is a logical formalism to model decision problems with uncertainty, such as \textit{Partially Observable Markov Decision Process} (POMDP) for verification of probabilistic systems.
    SSAT, however, is limited by its descriptive power within the PSPACE complexity class.
    More complex problems, such as the NEXPTIME-complete \textit{Decentralized POMDP} (Dec-POMDP), cannot be succinctly encoded with SSAT.
    To provide a logical formalism of such problems, we extend the \textit{Dependency Quantified Boolean Formula} (DQBF), a representative problem in the NEXPTIME-complete class, to its stochastic variant, named \textit{Dependency SSAT} (DSSAT), and show that DSSAT is also NEXPTIME-complete. We demonstrate the potential applications of DSSAT to circuit synthesis of probabilistic and approximate design.
    Furthermore, to study the descriptive power of DSSAT, we establish a polynomial-time reduction from Dec-POMDP to DSSAT.
    With the theoretical foundations paved in this work, we hope to encourage the development of DSSAT solvers for potential broad applications.
\end{abstract}

\section{Introduction}\label{sec:introduction}
Satisfiability (SAT) solvers~\cite{biere2009handbook} have been successfully applied to numerous research fields including artificial intelligence~\cite{nilsson2014principles,russell2016artificial}, electronic design automation~\cite{marques2000boolean,wang2009electronic}, software verification~\cite{berard2013systems,jhala2009software}, etc.
The tremendous benefits have encouraged the development of more advanced decision procedures for satisfiability with respect to more complex logics beyond pure propositional.
For example, solvers of the satisfiability modulo theories (SMT)~\cite{de2011satisfiability,barrett2018satisfiability} accommodate first order logic fragments; quantified Boolean formula (QBF)~\cite{qbflib,buning2009theory} allows both existential and universal quantifiers; stochastic Boolean satisfiabilty (SSAT)~\cite{littman2001stochastic,majercik2009stochastic} models uncertainty with random quantification; and dependency QBF (DQBF)~\cite{balabanov2014henkin,scholl2018dependency} equips Henkin quantifiers to describe multi-player games with partial information.
Due to their simplicity and generality, various satisfiability formulations are under active investigation.

Among the quantified decision procedures, QBF and SSAT are closely related.
While SSAT extends QBF to allow random quantifiers to model uncertainty, they are both PSPACE-complete~\cite{stockmeyer1973word}.
A number of SSAT solvers have been developed and applied in probabilistic planning, formal verification of probabilistic design, partially observable Markov decision process (POMDP), and analysis of software security.
For example, solver \texttt{MAXPLAN}~\cite{majercik1998maxplan} encodes a conformant planning problem as an exist-random quantified SSAT formula; solver \texttt{ZANDER}~\cite{majercik2003contingent} deals with partially observable probabilistic planning by formulating the problem as a general SSAT formula; solver \texttt{DC-SSAT}~\cite{majercik2005dc} relies on a divide-and-conquer approach to speedup the solving of a general SSAT formula.
Solvers \texttt{ressat} and \texttt{erssat}~\cite{lee2017solving,lee2018solving} are developed for random-exist and exist-random quantified SSAT formulas respectively, and applied to the formal verification of probabilistic design~\cite{lee2018towards}.
POMDP has also been studied under the formalism of SSAT~\cite{majercik2003contingent,SP19}.
Recently, bi-directional polynomial-time reductions between SSAT and POMDP are established~\cite{SP19}.
The quantitative information flow analysis for software security is also investigated as an exist-random quantified SSAT formula~\cite{fremont2017maximum}.

In view of the close relation between QBF and SSAT, we raise the question what would be the formalism that extends DQBF to the stochastic domain.
We formalize the \emph{dependency SSAT} (DSSAT) as the answer to the question.
We prove that DSSAT has the same NEXPTIME-complete complexity as DQBF~\cite{peterson2001lower}, and therefore it can succinctly encode decision problems with uncertainty in the NEXPTIME complexity class.

To highlight the benefits of DSSAT over DQBF, we note that DSSAT intrinsically represents an optimization problem (the answer is the maximum satisfying probability) while DQBF is a decision problem (the answer is either true or false).
The optimization nature of DSSAT potentially allows broader applications of the formalism.
Moreover, DSSAT is often preferable to DQBF in expressing problems involving uncertainty and probabilities.
As case studies, we investigate its applicability in probabilistic system design/verification and artificial intelligence.

In system design of the post Moore's law era, the practice of \emph{very large scale integration} (VLSI) circuit design experiences a paradigm shift in design principles to overcome the obstacle of physical scaling of computation capacity.
Probabilistic design~\cite{ProbDesign} and approximate design~\cite{ApproxDesign} are two such examples of emerging design methodologies.
The former does not require logic gates to be error-free, but rather allowing them to function with probabilistic errors.
The latter does not require the implementation circuit to behave exactly the same as the specification, but rather allowing their deviation to some extent.
These relaxations to design requirements provide freedom for circuit simplification and optimization.
We show that DSSAT can be a useful tool for the analysis of probabilistic design and approximate design.

The theory and applications of Markov decision process and its variants are among the most important topics in the study of artificial intelligence.
For example, the decision problem involving multiple agents with uncertainty and partial information is often considered as a decentralized POMDP (Dec-POMDP)~\cite{oliehoek2016concise}.
The independent actions and observations of the individual agents make POMDP for single-agent systems not applicable and require the more complex Dec-POMDP.
Essentially the complexity is lifted from the PSPACE-complete policy evaluation of finite-horizon POMDP to the NEXPTIME-complete Dec-POMDP.
We show that Dec-POMDP is polynomial time reducible to DSSAT.

To sum up, the main results of this work include:
\begin{itemize}
    \item formulating the DSSAT problem (Section~\ref{sec:DSSAT-def}),
    \item proving its NEXPTIME-completeness (Section~\ref{sec:DSSAT-complexity}), and
    \item showing its applications in:
          \begin{itemize}
              \item analyzing probabilistic/approximate design (Section~\ref{sec:partial})
              \item modeling Dec-POMDP (Section~\ref{sec:dec-pomdp}).
          \end{itemize}
\end{itemize}
Our results may encourage the development of DSSAT solvers to enable potential broad applications.
\section{Preliminaries}\label{sec:preliminaries}
In this section, we provide background knowledge about SSAT, DQBF, probabilistic design, and Dec-POMDP.

In the sequel, Boolean values \textsc{true} and \textsc{false} are represented by symbols $\top$ and $\bot$, respectively; they are also treated as $1$ and $0$, respectively, in arithmetic computation.
Boolean connectives $\neg, \vee, \wedge, \Rightarrow, \equiv$ are interpreted in their conventional semantics.
Given a set $V$ of variables, an \textit{assignment} $\alpha$ is a mapping from each variable $x\in V$ to $\mathbb{B}=\{\top,\bot\}$, and we denote the set of all assignments over $V$ by $\mathcal{A}(V)$. An assignment $\alpha$ \textit{satisfies} a Boolean formula $\phi$ over a set $V$ of variables if $\phi$ yields $\top$ after substituting all occurrences of every variable $x\in V$ with its assigned value $\alpha(x)$ and simplifying $\phi$ under the semantics of Boolean connectives. A Boolean formula $\phi$ over a set $V$ of variables is a \textit{tautology} if every assignment $\alpha\in \mathcal{A}(V)$ satisfies $\phi$.

\subsection{Stochastic Boolean Satisfiability}\label{subsec:ssat}
SSAT was first proposed by Papadimitriou and described as \textit{games against nature}~\cite{papadimitriou1985games}.
An SSAT formula $\Phi$ over a set $V=\{x_1,\ldots,x_n\}$ of variables is of the form: $Q_1 x_1, \ldots, Q_n x_n. \phi$, where each $Q_i \in \{\exists, \invR^{p}\}$ and Boolean formula $\phi$ over $V$ is quantifier-free.
Symbol $\exists$ denotes an existential quantifier, and $\invR^{p}$ denotes a randomized quantifier, which requires the probability that the quantified variable equals $\top$ to be $p\in[0,1]$.
Given an SSAT formula $\Phi$, the quantification structure $Q_1 x_1, \ldots, Q_n x_n$ is called the \emph{prefix}, and the quantifier-free Boolean formula $\phi$ is called the \emph{matrix}.

Let $x$ be the outermost variable in the prefix of an SSAT formula $\Phi$.
The satisfying probability of $\Phi$, denoted by $\Pr[\Phi]$, is defined recursively by the following four rules:
\begin{enumerate}
  \item[a)] $\Pr[\top]=1$,
  \item[b)] $\Pr[\bot]=0$,
  \item[c)] $\Pr[\Phi]=\max\{\Pr[\Phi|_{\neg x}], \Pr[\Phi|_{x}]\}$, if $x$ is existentially quantified,
  \item[d)] $\Pr[\Phi]=(1-p)\Pr[\Phi|_{\neg x}] + p\Pr[\Phi|_{x}]$, if $x$ is randomly quantified by $\invR^p$,
\end{enumerate}
where $\Phi|_{\neg x}$ and $\Phi|_{x}$ denote the SSAT formulas obtained by eliminating the outermost quantifier of $x$ via substituting the value of $x$ in the matrix with $\bot$ and $\top$, respectively.

The \textit{decision version} of SSAT is stated as follows. Given an SSAT formula $\Phi$ and a threshold $\theta \in [0,1]$, decide whether $\Pr[\Phi]\geq \theta$.
On the other hand, the \textit{optimization version} asks to compute $\Pr[\Phi]$.
The decision version of SSAT was shown to be PSPACE-complete~\cite{papadimitriou1985games}.

\subsection{Dependency Quantified Boolean Formula}\label{subsec:dqbf}
DQBF was formulated as \textit{multiple-person alternation}~\cite{peterson1979multiple}.
In contrast to the \textit{linearly ordered} prefix used in QBF, i.e., an existentially quantified variable will depend on all of its preceding universally quantified variables, the quantification structure in DQBF is extended with Henkin quantifiers, where the dependency of an existentially quantified variable on the universally quantified variables can be explicitly specified.

A DQBF $\Phi$ over a set $V=\{x_1,\ldots,x_n,y_1,\ldots,y_m\}$ of variables is of the form:
\begin{eqnarray}
  \forall x_1, \ldots, \forall x_n, \exists y_1(D_{y_1}), \ldots, \exists y_m(D_{y_m}). \phi,
\end{eqnarray}
where each $D_{y_j}\subseteq \{x_1,\ldots,x_n\}$ denotes the set of variables that variable $y_j$ can depend on, and Boolean formula $\phi$ over $V$ is quantifier-free.
We denote the set $\{x_1,\ldots,x_n\}$ (resp. $\{y_1,\ldots,y_m\}$) of universally (resp. existentially) quantified variables of $\Phi$ by $V_{\Phi}^\forall$ (resp. $V_{\Phi}^\exists$).

Given a DQBF $\Phi$, it is satisfied if for each variable $y_j$, there exists a function $f_j:\mathcal{A}(D_{y_j})\rightarrow \mathbb{B}$, such that after substituting variables in $V_{\Phi}^\exists$ with their corresponding functions respectively, matrix $\phi$ yields a tautology over $V_{\Phi}^\forall$.
The set of functions $\mathcal{F}=\{f_1,\ldots,f_m\}$ is called a set of \textit{Skolem} functions for $\Phi$.
In other words, $\Phi$ is satisfied by $\mathcal{F}$ if
\begin{linenomath}
  \begin{align}\label{eq:dqbf_min}
    \min_{\beta \in \mathcal{A}(V_{\Phi}^{\forall})}\mathds{1}_{\phi|_{\mathcal{F}}}(\beta)=1,
  \end{align}
\end{linenomath}
where $\mathds{1}_{\phi|_{\mathcal{F}}}(\cdot)$ is an indicator function to indicate whether an assignment over $V_{\Phi}^{\forall}$ belongs to the set of satisfying assignments of matrix $\phi$, when variables in $V_{\Phi}^{\exists}$ are substituted by their Skolem functions in $\mathcal{F}$.
That is, $\phi|_{\mathcal{F}} = \{\beta\ |\ \phi(\beta(x_1), \ldots, \beta(x_n)$, $f_1|_{\beta}$, $\ldots, f_m|_{\beta})\equiv \top\}$, where $f_j|_{\beta}$ is the logical value derived by substituting every $x_i\in D_{y_j}$ with $\beta(x_i)$ in function $f_j$.
The satisfiability problem of DQBF was shown to be NEXPTIME-complete \cite{peterson2001lower}.

\subsection{Probabilistic Design}
In this paper, a \textit{design} refers to a combinational Boolean logic circuit, which is a directed acyclic graph $G=(V,E)$, where $V$ is a set of vertices, and $E \subseteq V \times V$ is a set of edges.
Each vertex in $V$ can be a primary input, primary output, or an intermediate gate.
An intermediate gate is associated with a Boolean function.
An edge $(u,v) \in E$ signifies the connection from $u$ to $v$, denoting the associated Boolean function of $v$ may depend on $u$.
A circuit is called a \emph{partial design} if some of the intermediate gates are black boxes, that is, their associated Boolean functions are not specified.

\begin{figure}[t]
  \begin{center}
    \adjustbox{trim={.0\width} {.76\height} {.0\width} {.01\height},clip}{\includegraphics[width=0.5\textwidth]{./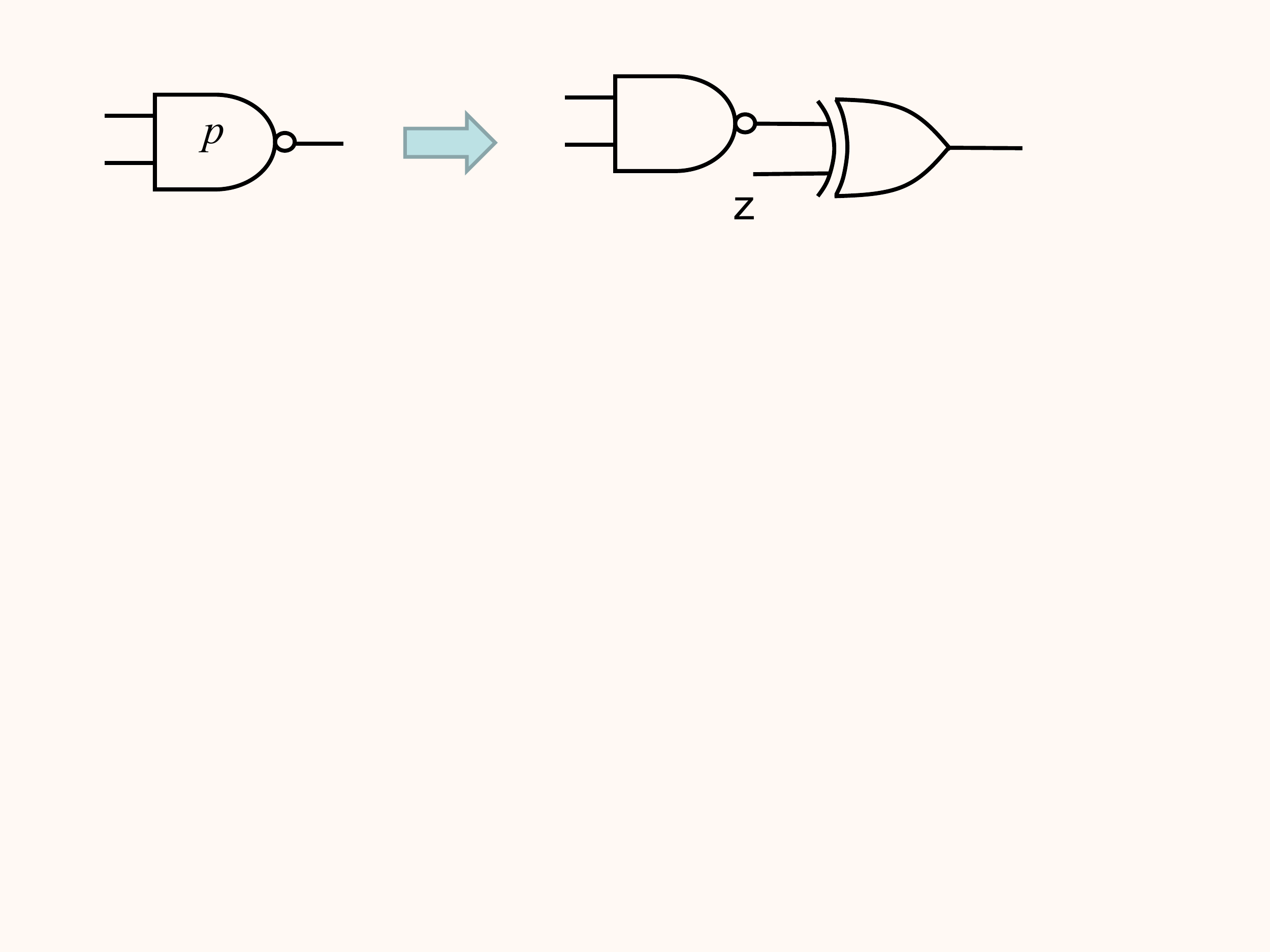}}
    \caption{Conversion of the distillation operation.}
    \label{fig:distill}
  \end{center}
  \vspace{-0.5cm}
\end{figure}

A \textit{probabilistic design} is an extension of conventional Boolean logic circuits to model the scenario where intermediate gates exhibit probabilistic behavior.
In a probabilistic design, each intermediate gate has an \textit{error rate}, i.e., the probability for the gate to produce an erroneous output.
An intermediate gate is \textit{erroneous} if its error rate is nonzero.
Using the \textit{distillation operation} \cite{lee2018towards}, an erroneous gate can be modeled by its corresponding error-free gate XORed with an auxiliary input, which valuates to $\top$ with a probability equal to the error rate.
As illustrated in Figure~\ref{fig:distill}, a NAND gate with error rate $p$ is converted to an error-free NAND gate XORed with a fresh auxiliary input $z$ with $\Pr[z = \top] = p$ so that it triggers the error with probability $p$.
After applying the distillation operation to every erroneous gate, all the intermediate gates in the distilled design become error-free, which makes the techniques for conventional Boolean circuit reasoning applicable.

\subsection{Decentralized POMDP}
Dec-POMDP is a formalism for multi-agent systems under uncertainty and with partial information.
Its computational complexity was shown to be NEXPTIME-complete~\cite{bernstein2002complexity}.
In the following, we briefly review the definition, optimality criteria, and value function of Dec-POMDP from the literature~\cite{oliehoek2016concise}.

A Dec-POMDP is specified by a tuple $\mathcal{M} = (I,S,\{A_i\},$ $T,  \rho, \{O_i\}, \Omega, \Delta_0, h)$, where
$I=\{1,\ldots,n\}$ is a finite set of $n$ agents,
$S$ is a finite set of states,
$A_i$ is a finite set of actions of Agent $i$,
$T: S \times (A_1 \times \cdots \times A_n) \times S \rightarrow [0,1]$ is a transition distribution function with
$T(s,\Vec{a},s')=\Pr[s'|s,\vec{a}]$, the probability to transit to state $s'$ from state $s$ after taking actions $\vec{a}$,
$\rho: S \times (A_1 \times \cdots \times A_n) \rightarrow \mathbb{R}$ is a reward function with $\rho(s, \vec{a})$ giving the reward for being in state $s$ and taking actions $\vec{a}$,
$O_i$ is a finite set of observations for Agent $i$,
$\Omega: S \times (A_1 \times \cdots \times A_n) \times (O_1 \times \cdots \times O_n) \rightarrow [0,1]$ is an observation distribution function with
$\Omega(s',\Vec{a},\vec{o})=\Pr[\vec{o}|s',\vec{a}]$, the probability to receive observation $\vec{o}$ after taking actions $\vec{a}$ and transiting to state $s'$, $\Delta_0: S \rightarrow [0,1]$ is an initial state distribution function with $\Delta_0(s)=\Pr[s^0 \equiv s]$, the probability for the initial state $s^0$ being state $s$, and $h$ is a planning horizon, which we assume finite in this work.

Given a Dec-POMDP $\mathcal{M}$, we aim at maximizing the expected total reward $\mathbb{E}[\sum_{t=0}^{h-1}\rho(s^t,\vec{a}^t)]$ through searching an optimal \textit{joint policy} for the team of agents.
Specifically, a \textit{policy} $\pi_i$ of Agent $i$ is a mapping from the agent's \textit{observation history}, i.e., a sequence of observations $\underline{o_i^t}=o_i^0,\ldots,o_i^t$ received by Agent $i$, to an action $a_i^{t+1}\in A_i$.
A joint policy for the team of agents $\vec{\pi}=(\pi_1,\ldots,\pi_n)$ maps the agents' joint observation history $\vec{\underline{o}}^t=(\underline{o_1^t},\ldots,\underline{o_n^t})$ to actions $\vec{a}^{t+1}=(\pi_1(\underline{o_1^t}),\ldots,\pi_n(\underline{o_n^t}))$.
We shall focus on deterministic policies only, as it was shown that every Dec-POMDP with a finite planning horizon has a deterministic optimal joint policy~\cite{oliehoek2008optimal}.

To assess the quality of a joint policy $\vec{\pi}$, its \textit{value} is defined to be $\mathbb{E}[\sum_{t=0}^{h-1}\rho(s^t,\vec{a}^t)|\Delta_0,\vec{\pi}]$.
The \textit{value function} $V(\vec{\pi})$ can be computed in a recursive manner, where for $t=h-1$, $V^\pi(s^{h-1},\vec{\underline{o}}^{h-2})=\rho(s^{h-1},\vec{\pi}(\vec{\underline{o}}^{h-2}))$, and for $t<h-1$,
\begin{linenomath}
  \begin{align}
     & V^\pi(s^t,\vec{\underline{o}}^{t-1})=\rho(s^t,\vec{\pi}(\vec{\underline{o}}^{t-1}))+\nonumber                                                                     \\
     & \sum_{s^{t+1}\in S}\sum_{\vec{o}^t\in\vec{O}}\Pr[s^{t+1},\vec{o}^t|s^t,\vec{\pi}(\vec{\underline{o}}^t)]V^\pi(s^{t+1},\vec{\underline{o}}^{t}).\label{eq:bellman}
  \end{align}
\end{linenomath}
The probability $\Pr[s^{t+1},\vec{o}^{t}|s^t,\vec{\pi}(\vec{\underline{o}}^t)]$ in the above equation is the product of $T(s^t,\vec{\pi}(\vec{\underline{o}}^t),s^{t+1})$ and $\Omega(s^{t+1},\vec{\pi}(\vec{\underline{o}}^t),\vec{o}^{t})$.
Eq.~\eqref{eq:bellman} is also called the \textit{Bellman Equation} of Dec-POMDP.

Denoting the empty observation history at the first stage (i.e., $t=0$) with the symbol $\vec{\underline{o}}^{-1}$, the value $V(\vec{\pi})$ of a joint policy equals $\sum_{s^0\in S}\Delta_0(s^0)V^\pi(s^0,\vec{\underline{o}}^{-1})$.

\section{DSSAT Formulation}\label{sec:DSSAT-def}
In this section, we extend DQBF to its stochastic variant, named \textit{Dependency Stochastic Boolean Satisfiability} (DSSAT).

A DSSAT formula $\Phi$ over $V=\{x_1,\ldots,x_n,y_1,\ldots,y_m\}$ is of the form:
\begin{align} \label{eq:dssat}
    \invR^{p_1} x_1, \ldots, \invR^{p_n} x_n, \exists y_1(D_{y_1}), \ldots, \exists y_m(D_{y_m}). \phi,
\end{align}
where each $D_{y_j}\subseteq \{x_1,\ldots,x_n\}$ denotes the set of variables that variable $y_j$ can depend on, and Boolean formula $\phi$ over $V$ is quantifier-free.
We denote the set $\{x_1,\ldots,x_n\}$ (resp. $\{y_1,\ldots,y_m\}$) of randomly (resp. existentially) quantified variables of $\Phi$ by $V_{\Phi}^{\invR}$ (resp. $V_{\Phi}^\exists$).

Given a DSSAT formula $\Phi$ and a set of Skolem functions $\mathcal{F}=\{f_j:\mathcal{A}(D_{y_j})\rightarrow \mathbb{B}\ |\ j=1,\ldots,m\}$, the satisfying probability $\Pr[\Phi|_{\mathcal{F}}]$ of $\Phi$ with respect to $\mathcal{F}$ is defined by the following equation:
\begin{linenomath}
    \begin{align}\label{eq:dssat_sum}
        \Pr[\Phi|_{\mathcal{F}}]=\sum_{\alpha \in \mathcal{A}(V_{\Phi}^{\invR})}\mathds{1}_{\phi|_{\mathcal{F}}}(\alpha)w(\alpha),
    \end{align}
\end{linenomath}
where $\mathds{1}_{\phi|_{\mathcal{F}}}(\cdot)$ is the indicator function defined in Section~\ref{subsec:dqbf} and $w(\alpha)=\prod_{i=1}^n p_i^{\alpha(x_i)}(1-p_i)^{1-\alpha(x_i)}$ is the weighting function for assignments.
In other words, the satisfying probability is the summation of weights of satisfying assignments over $V_{\Phi}^{\invR}$.
The weight of an assignment can be understood as its occurring probability in the space of $\mathcal{A}(V_{\Phi}^{\invR})$.

The \textit{decision version} of DSSAT is stated as follows.
Given a DSSAT formula $\Phi$ and a threshold $\theta \in [0,1]$, decide whether there exists a set of Skolem functions $\mathcal{F}$ such that $\Pr[\Phi|_{\mathcal{F}}]\geq \theta$.
On the other hand, the \textit{optimization version} asks to find a set of Skolem functions to maximize the satisfying probability of $\Phi$.

The formulation of SSAT can be extended by incorporating universal quantifiers, resulting in a unified framework named \textit{extended} SSAT 
\cite{majercik2009stochastic}, which subsumes both QBF and SSAT.
In the extended SSAT, besides the four rules discussed in Section~\ref{subsec:ssat} for calculating the satisfying probability of an SSAT formula $\Phi$, the following rule is added: $\Pr[\Phi]=\min\{\Pr[\Phi|_{\neg x}], \Pr[\Phi|_{x}]\}$, if $x$ is universally quantified.
Similarly, an \textit{extended} DSSAT 
formula $\Phi$ over a set of variables
$\{x_1,\ldots,x_n,y_1,\ldots,y_m,z_1,\ldots,z_l\}$ is of the form:
\begin{linenomath}
    \begin{align}\label{eq:xdssat}
        Q_1 v_1,\ldots, Q_{n+l} v_{n+l},\exists y_1(D_{y_1}),\ldots, \exists y_m(D_{y_m}). \phi,
    \end{align}
\end{linenomath}
where $Q_i v_i$ equals either $\invR^{p_k} x_k$ or $\forall z_k$ for some $k$ with $v_i \neq v_j$ for $i \neq j$,
and each $D_{y_j}\subseteq \{x_1,\ldots,x_n,z_1,\ldots,z_l\}$ denotes the set of randomly and universally quantified variables which variable $y_j$ can depend on.
The satisfying probability of $\Phi$ with respect to a set of Skolem functions $\mathcal{F}=\{f_j:\mathcal{A}(D_{y_j})\rightarrow \mathbb{B}\ |\ j=1,\ldots,m\}$, denoted by $\Pr[\Phi|_{\mathcal{F}}]$, can be computed by recursively applying the aforementioned five rules to the induced formula of $\Phi$ with the existential variables $y_j$ being substituted with their respective Skolem functions $f_j$.
Under the above computation scheme, both Eq.~\eqref{eq:dqbf_min} and Eq.~\eqref{eq:dssat_sum} are special cases, where the variables preceding the existential quantifiers in the prefixes are solely universally or randomly quantified, and hence the fifth or the fourth rule is applied to calculate $\Pr[\Phi|_{\mathcal{F}}]$.

Note that in the above extension the Henkin-type quantifiers are only defined for the existential variables.
Although the extended formulation increases practical expressive succinctness, the computational complexity is not changed as to be shown in the next section.

\section{DSSAT Complexity}\label{sec:DSSAT-complexity}
In the following, we show that the decision version of DSSAT is NEXPTIME-complete.
\begin{theorem}
    DSSAT is NEXPTIME-complete.
\end{theorem}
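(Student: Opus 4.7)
The plan is to establish membership and hardness separately, using DQBF as the natural source problem for the lower bound thanks to its structural similarity to DSSAT.

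For the NEXPTIME upper bound, I would describe a nondeterministic exponential-time algorithm that proceeds in two phases. In the first phase, the machine guesses Skolem functions $\mathcal{F} = \{f_1, \ldots, f_m\}$ for the existential variables, where each $f_j$ is represented as a truth table of size at most $2^{|D_{y_j}|} \leq 2^n$. The total size of the guess is therefore exponential in the input length. In the second phase, it deterministically computes $\Pr[\Phi|_{\mathcal{F}}]$ using Eq.~\eqref{eq:dssat_sum}: iterate over the $2^n$ assignments $\alpha \in \mathcal{A}(V_\Phi^{\invR})$, and for each one evaluate the indicator $\mathds{1}_{\phi|_{\mathcal{F}}}(\alpha)$ and the weight $w(\alpha)$ in polynomial time (using the guessed truth tables to resolve the $y_j$'s), accumulating the weighted sum. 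Comparing the result to $\theta$ settles membership. Since both phases run in exponential time, DSSAT lies in NEXPTIME.

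For NEXPTIME-hardness, I would reduce from DQBF, which is NEXPTIME-complete by Peterson, Reif, and Azhar. Given a DQBF
\begin{align*}
\Psi = \forall x_1, \ldots, \forall x_n, \exists y_1(D_{y_1}), \ldots, \exists y_m(D_{y_m}).\, \phi,
\end{align*}
construct the DSSAT instance
\begin{align*}
\Phi = \invR^{1/2} x_1, \ldots, \invR^{1/2} x_n, \exists y_1(D_{y_1}), \ldots, \exists y_m(D_{y_m}).\, \phi,
\end{align*}
with threshold $\theta = 1$. The reduction is clearly polynomial (in fact, linear). For correctness, observe that for any set $\mathcal{F}$ of Skolem functions,
\begin{align*}
\Pr[\Phi|_{\mathcal{F}}] = \sum_{\alpha \in \mathcal{A}(V_\Phi^{\invR})} \mathds{1}_{\phi|_{\mathcal{F}}}(\alpha) \cdot 2^{-n},
\end{align*}
so $\Pr[\Phi|_{\mathcal{F}}] = 1$ holds if and only if every assignment $\alpha$ satisfies $\phi|_{\mathcal{F}}$, which by Eq.~\eqref{eq:dqbf_min} is precisely the condition that $\mathcal{F}$ is a set of Skolem functions witnessing satisfiability of $\Psi$. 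Hence $\Psi$ is satisfiable if and only if there exists $\mathcal{F}$ with $\Pr[\Phi|_{\mathcal{F}}] \geq 1$.

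The main subtle point is the NEXPTIME upper bound, specifically justifying that after the exponential-size guess, the remaining computation is only deterministically exponential rather than something larger: the key is that once $\mathcal{F}$ is fixed as a truth table, each of the $2^n$ summands reduces to a polynomial-size Boolean evaluation, so the arithmetic stays within exponential bounds (the weights are rationals with polynomial-bit denominators, and the sum has at most $2^n$ terms). The hardness direction is essentially syntactic once one recognizes that uniform random quantifiers with threshold $1$ simulate universal quantifiers, and the Henkin dependency structure is preserved verbatim. The same argument, with $\invR^{1/2}$ used in place of $\forall$ only where needed, shows that extended DSSAT as defined in Eq.~\eqref{eq:xdssat} remains in NEXPTIME, yielding the remark that adding universal quantifiers does not increase complexity.
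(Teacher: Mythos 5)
Your proposal is correct and follows essentially the same route as the paper's proof: membership via guessing Skolem functions (as exponential-size truth tables) and then summing weights over all $2^n$ assignments in deterministic exponential time, and hardness via the same syntactic reduction from DQBF replacing each $\forall x_i$ with $\invR^{0.5} x_i$ and threshold $\theta=1$. Your write-up is merely a bit more explicit than the paper's about the truth-table representation and the bit-size of the arithmetic, which is a welcome but not substantive refinement.
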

\begin{proof}
    To show that DSSAT is NEXPTIME-complete, we have to show that it belongs to the NEXPTIME complexity class and that it is NEXPTIME-hard.

    First, to see why DSSAT belongs to the NEXPTIME complexity class, observe that a Skolem function for an existentially quantified variable can be guessed and constructed in nondeterministic exponential time with respect to the number of randomly quantified variables.
    Given the guessed Skolem functions, the evaluation of the matrix, summation of weights of satisfying assignments, and comparison against the threshold $\theta$ can also be performed in exponential time.
    Overall, the whole procedure is done in nondeterministic exponential time with respect to the input size, and hence DSSAT belongs to the NEXPTIME complexity class.

    Second, to see why DSSAT is NEXPTIME-hard, we reduce the NEXPTIME-complete problem DQBF to DSSAT as follows. Given an arbitrary DQBF:
    \[
        \Phi_Q=\forall x_1, \ldots, \forall x_n, \exists y_1(D_{y_1}), \ldots, \exists y_m(D_{y_m}).\phi,
    \]
    we construct a DSSAT formula:
    \[
        \Phi_S=\invR^{0.5} x_1, \ldots, \invR^{0.5} x_n, \exists y_1(D_{y_1}), \ldots, \exists y_m(D_{y_m}).\phi
    \]
    by changing every universal quantifier to a randomized quantifier with probability $0.5$.
    The reduction can be done in polynomial time with respect to the size of $\Phi_Q$.
    We will show that $\Phi_Q$ is satisfiable if and only if there exists a set of Skolem functions $\mathcal{F}$ such that $\Pr[\Phi_S|_{\mathcal{F}}]\geq 1$.

    The ``only if'' direction: As $\Phi_Q$ is satisfiable, there exists a set of Skolem functions $\mathcal{F}$ such that after substituting the existentially quantified variables with the corresponding Skolem functions, matrix $\phi$ becomes a tautology over variables $\{x_1,\ldots,x_n\}$.
    Therefore, $\Pr[\Phi_S|_{\mathcal{F}}]=1\geq 1$.

    The ``if'' direction: As there exists a set of Skolem functions $\mathcal{F}$ such that $\Pr[\Phi_S|_{\mathcal{F}}]\geq 1$, after substituting the existentially quantified variables with the corresponding Skolem functions, each assignment $\alpha\in \mathcal{A}(\{x_1,\ldots,x_n\})$ must satisfy $\phi$, i.e., $\phi$ becomes a tautology over variables $\{x_1,\ldots,x_n\}$.
    Otherwise, the satisfying probability $\Pr[\Phi_S|_{\mathcal{F}}]$ must be less than $1$ as the weight $2^{-n}$ of some unsatisfying assignment is missing from the summation.
    Therefore, $\Phi_Q$ is satisfiable.
\end{proof}

When DSSAT is extended with universal quantifiers,
its complexity remains in the NEXPTIME complexity class as the fifth rule of the satisfying probability calculation does not incur any complexity overhead.
Therefore the following corollary is immediate.
\begin{corollary}
    The decision problem of DSSAT extended with universal quantifiers of Eq.~\eqref{eq:xdssat} is NEXPTIME-complete.
\end{corollary}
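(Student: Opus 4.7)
The plan is to piggyback on the proof of the theorem: the corollary will follow once we check that extending DSSAT with universal quantifiers neither leaves the NEXPTIME class (membership) nor changes the fact that it is at least as hard (hardness), and the hardness half is essentially immediate.

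For hardness, I would observe that the extended DSSAT formulation of Eq.~\eqref{eq:xdssat} strictly subsumes the plain DSSAT formulation of Eq.~\eqref{eq:dssat} by taking the set of universally quantified variables to be empty. Since the preceding theorem already reduces DQBF to plain DSSAT in polynomial time, the same reduction witnesses NEXPTIME-hardness of extended DSSAT, with no additional work required.

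For membership, I would describe a nondeterministic algorithm mirroring the one used in the theorem. First, guess, for each existentially Henkin-quantified variable $y_j$, a Skolem function $f_j : \mathcal{A}(D_{y_j}) \to \mathbb{B}$ represented by its truth table of size at most $2^{n+l}$, which takes nondeterministic exponential time in the input size. Second, substitute these Skolem functions into the matrix $\phi$, obtaining a formula whose only free variables are the randomly and universally quantified ones. Third, evaluate $\Pr[\Phi|_{\mathcal F}]$ by a straightforward bottom-up application of the five recursive rules: the two base rules for $\top$ and $\bot$, the $\max$ rule (unused here), the weighted-sum rule for $\invR^{p}$, and the newly added $\min$ rule for $\forall$. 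This recursion peels off one prefix variable at a time, so its tree has at most $2^{n+l}$ leaves, each evaluated in polynomial time; the total cost is exponential, and the whole procedure fits in NEXPTIME.

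The only step that deserves a sentence of justification is the claim that the $\min$ rule introduces no complexity overhead relative to the other four, which is the crux of the corollary. I would argue that the evaluation is already organized as a prefix-ordered recursion on $Q_1 v_1, \ldots, Q_{n+l} v_{n+l}$, and at each internal node we perform one arithmetic combination of the two child values, namely a weighted sum (rule d) or a minimum (rule e) depending on $Q_i$; both are constant-time operations on numbers of polynomial bit-length (the probabilities are dyadic rationals with denominators bounded by $2^n$). Hence the total running time is of the same order as in the purely random case, so the algorithm remains in NEXPTIME and the corollary follows by combining with the hardness observation above.
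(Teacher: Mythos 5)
Your proposal is correct and follows essentially the same route as the paper, which treats the corollary as immediate: hardness is inherited because extended DSSAT subsumes plain DSSAT (take the set of universal variables to be empty and reuse the DQBF reduction), and membership holds because the added $\min$ rule is no more expensive than the weighted-sum rule in the exponential-size prefix recursion over guessed Skolem-function truth tables. One minor slip worth noting: the probabilities $p_i$ are arbitrary rationals given in the input and need not be dyadic with denominator $2^n$, but your complexity argument survives unchanged since all intermediate values remain rationals of bit-length polynomial in the input size.
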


\section{Application: Analyzing Probabilistic and Approximate Partial Design}\label{sec:partial}
After formulating DSSAT and proving its NEXPTIME-completeness, we show its application to the analysis of probabilistic design and approximate design.
Specifically, we consider the probabilistic version of the \emph{topologically constrained logic synthesis problem}~\cite{Sinha:2002,balabanov2014henkin}, or equivalently the \emph{partial design problem}~\cite{ICCD13}.

In the \emph{(deterministic) partial design problem}, we are given a specification function $G(X)$ over primary input variables $X$ and a \textit{partial design} $C_F$ with black boxes to be synthesized.
The Boolean functions induced at the primary outputs of $C_F$ can be described by $F(X,T)$, where $T$ corresponds to the variables of the black box outputs.
Each black box output $t_i$ is specified with its input variables (i.e., dependency set) $D_i \subseteq X \cup Y$ in $C_F$, where $Y$ represents the variables for intermediate gates in $C_F$ referred to by the black boxes.
The partial design problem aims at deriving the black box functions $\{h_1(D_1), \ldots, h_{|T|}(D_{|T|})\}$ such that substituting $t_i$ with $h_i$ in $C_F$ makes the resultant circuit function equal $G(X)$.
The above partial design problem can be encoded as a DQBF problem; moreover, the partial equivalence checking problem is shown to be NEXPTIME-complete~\cite{ICCD13}.

Specifically, the DQBF that encodes the partial equivalence checking problem is of the form:
\begin{align}\label{eq:dqbf-pd}
   & \forall X,\forall Y,\exists T(D).\nonumber       \\
   & (Y \equiv E(X)) \rightarrow (F(X,T)\equiv G(X)),
\end{align}
where $D$ consists of $(D_1, \ldots, D_{|T|})$, $E$ corresponds to the defining functions of $Y$ in $C_F$, and the operator ``$\equiv$'' denotes elementwise equivalence between its two operands.

\begin{figure}[t]
  \begin{center}
    \adjustbox{trim={.08\width} {.59\height} {.54\width} {.01\height},clip}{\includegraphics[width=0.95\textwidth]{./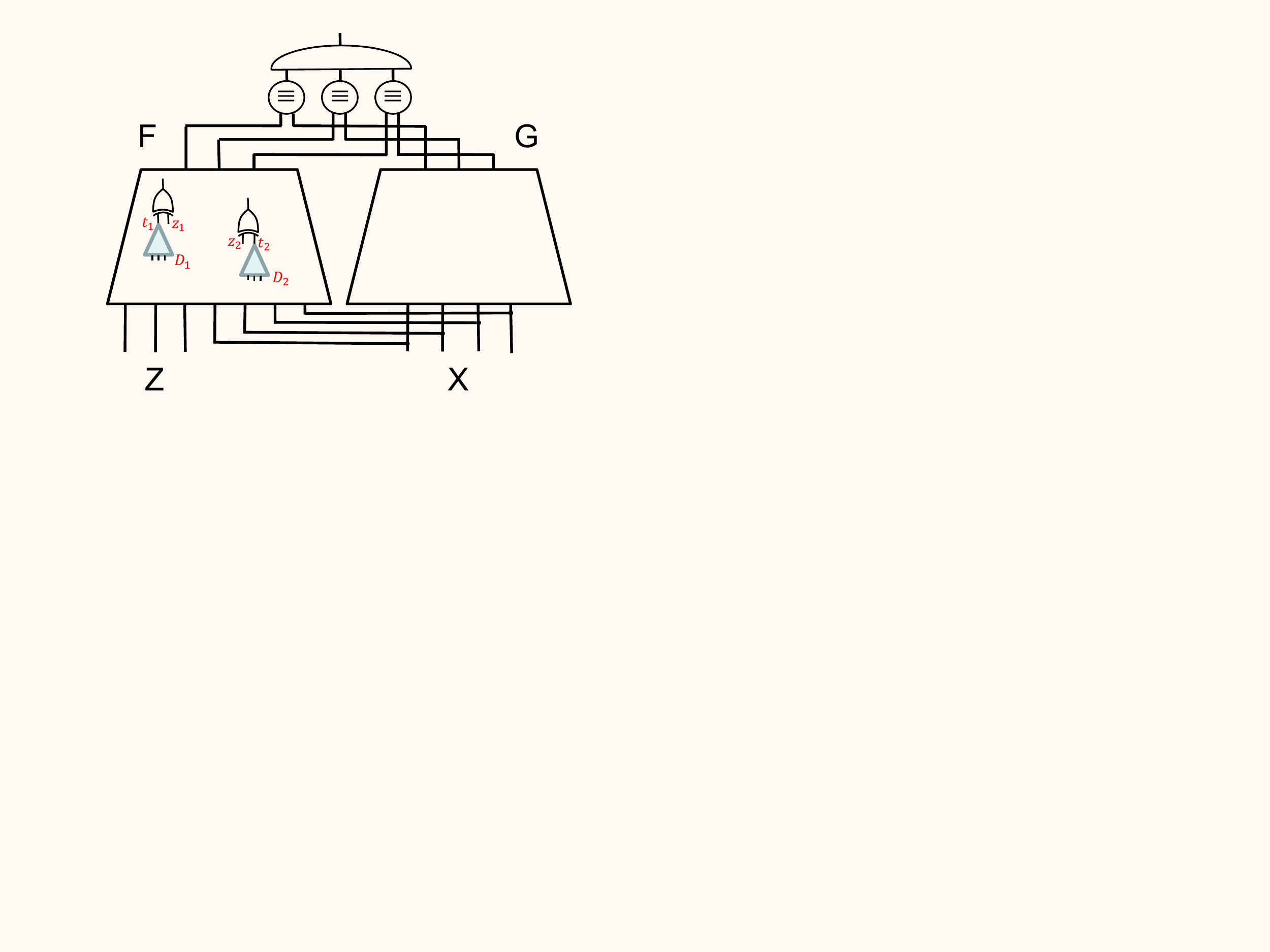}}
    \caption{Circuit for the equivalence checking of probabilistic partial design.}
    \label{fig:miter}
  \end{center}
  \vspace{-0.5cm}
\end{figure}

The above partial design problem can be extended to its probabilistic variant, which is illustrated by the circuit shown in Figure~\ref{fig:miter}.
The \emph{probabilistic partial design problem} is the same as the deterministic partial design problem except that
$C_F$ is a distilled probabilistic design \cite{lee2018towards} with black boxes, whose functions at the primary outputs can be described by $F(X,Z,T)$, where $Z$ represents the variables for the auxiliary inputs that trigger errors in $C_F$ (including the errors of the black boxes) and $T$ corresponds to the variables of the black box outputs.
Each black box output $t_i$ is specified with its input variables (i.e., dependency set) $D_i \subseteq X \cup Y$ in $C_F$.
When $t_i$ is substituted with $h_i$ in $C_F$, the function of the resultant circuit is required to be sufficiently close to the specification with respect to some expected probability.

\begin{theorem}\label{thm:partial}
  The probabilistic partial design equivalence checking problem is NEXPTIME-complete.
\end{theorem}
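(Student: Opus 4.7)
My plan is to establish both bounds separately, leveraging Theorem 1 for the upper bound and the known hardness of the deterministic version for the lower bound.

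For the NEXPTIME upper bound, I would construct a polynomial-time reduction from probabilistic partial design equivalence checking to the decision version of DSSAT, then invoke Theorem 1. Mirroring the DQBF encoding of Eq.~\eqref{eq:dqbf-pd}, I replace the outer universal quantifiers over $X$ and $Y$ by randomized quantifiers with probability $0.5$, and quantify the error-triggering auxiliary inputs $Z$ with their gate error rates. That is, I write
\[
  \invR^{0.5} X,\; \invR^{0.5} Y,\; \invR^{p_Z} Z,\; \exists T(D).\; \bigl((Y \equiv E(X,Z)) \rightarrow (F(X,Z,T) \equiv G(X))\bigr),
\]
where $D$ is the union of the black-box dependency sets and the existential Henkin structure of $T$ is inherited from the partial design. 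The Skolem functions for $T$ correspond precisely to legal black-box implementations $\{h_i\}$, and the satisfying probability of this DSSAT formula is a linear function of the closeness probability $\Pr[F \equiv G]$ (under the conditioning $Y \equiv E(X,Z)$) plus a fixed constant contributed by assignments violating the antecedent. Hence the desired closeness threshold for the probabilistic partial design problem translates into an adjusted DSSAT threshold computable in polynomial time. Because this construction is polynomial in the input, Theorem 1 places the problem in NEXPTIME.

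For NEXPTIME-hardness, I would reduce from the deterministic partial design equivalence checking problem, shown NEXPTIME-complete in~\cite{ICCD13}. Given an instance of the deterministic problem, I treat it as a probabilistic instance by declaring every intermediate gate error-free (so $Z$ is empty and the distilled circuit collapses to the original circuit) and demand a closeness probability of exactly $1$ under uniform inputs. Under this setting, the probabilistic problem asks exactly whether there exist black-box functions making $F(X,T) \equiv G(X)$ hold on every input, which is the deterministic problem verbatim. The reduction is linear-time, completing the lower bound.

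The principal subtlety I anticipate is the treatment of the intermediate variables $Y$ in the upper-bound encoding. In the DQBF formulation, the implication trick $(Y \equiv E(X)) \rightarrow \cdots$ is harmless because universal quantification collapses over the tautology domain, but in DSSAT every assignment to $Y$ contributes a nonzero weight to the satisfying probability. I therefore need to verify that the contribution of the `don't care' assignments $Y \not\equiv E(X,Z)$ is a constant independent of the Skolem choice and compute the corresponding threshold shift exactly; alternatively, one could eliminate $Y$ by Tseitin-style flattening, but this conflicts with the black-box dependency sets $D_i \subseteq X \cup Y$ that may refer to intermediate signals, so the additive-threshold approach is the cleanest route.
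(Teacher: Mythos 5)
Your proposal is correct, but both halves run in the opposite direction from the paper's. For membership, the paper argues directly: guess the black-box functions and validate them in time exponential in the number of black-box inputs; you instead reduce the design problem to DSSAT and invoke Theorem~1, which is equally valid since NEXPTIME is closed under polynomial-time reductions. Your randomized-$Y$ trick is the genuinely novel part and it is sound: because the antecedent $(Y \equiv E)$ does not mention $T$, the vacuously satisfied assignments contribute exactly $1-2^{-|Y|}$ independently of the Skolem functions, giving $\Pr[\Phi|_{\mathcal{F}}] = (1-2^{-|Y|}) + 2^{-|Y|}\Pr_{X,Z}[F \equiv G]$, and the shifted threshold needs only $|Y|$ extra bits, so it is polynomial-time computable. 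This neatly avoids the extended formalism with universal quantifiers that the paper relies on for its encoding in Eq.~\eqref{eq:dssat-pd} (justified via the corollary after Theorem~1); in fact the paper never uses that encoding inside the proof of Theorem~\ref{thm:partial} itself, so you have effectively supplied the correctness argument the paper leaves implicit. For hardness, the paper reduces DSSAT \emph{to} the probabilistic partial design problem (specification $G$ a tautology, $C_F$ computing the matrix $\phi$ with black boxes for the $y_j$ constrained by $D_{y_j}$, and input probabilities $p_i$), keeping the paper self-contained relative to Theorem~1 and illustrating how DSSAT structure embeds into the design problem; you instead observe that the deterministic problem of \cite{ICCD13} is the special case with all error rates $0$ and threshold $1$, which is shorter but imports that external NEXPTIME-hardness result. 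One point worth making explicit in your hardness reduction: the threshold-$1$ instance coincides with exact functional equivalence only because every input assignment carries strictly positive weight $2^{-|X|}$ under the uniform distribution, so probability $1$ forces $F(X,T) \equiv G(X)$ on every assignment; with that sentence added, your argument is complete.
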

\begin{proof}
  To show that the probabilistic partial design problem is in the NEXPTIME complexity class,
  we note that the black box functions can be guessed and validated in time exponential to the number of black box inputs.

  To show completeness in the NEXPTIME complexity class,
  we reduce the known NEXPTIME-complete DSSAT problem to the probabilistic partial design problem,
  similar to the construction used in the previous work~\cite{ICCD13}.
  Given a DSSAT instance, it can be reduced to a probabilistic partial design instance in polynomial time as follows.
  Without loss of generality, consider the DSSAT formula of Eq.~\eqref{eq:dssat}.
  We create a probabilistic partial design instance by letting the specification $G$ be a tautology and letting $C_F$ be a probabilistic design with black boxes,
  which involves primary inputs $x_1, \ldots, x_n$ and black box outputs $y_1, \ldots, y_m$ to compute the matrix $\phi$.
  The driving inputs of the black box output $y_j$ is specified by the dependency set $D_{y_j}$ in Eq.~\eqref{eq:dssat}, and the probability for primary input $x_i$ to evaluate to $\top$ is set to $p_i$.
  The original DSSAT formula is satisfiable with respect to a target satisfying probability $\theta$ if and only if there exist implementations of the black boxes such that the resultant circuit composed with the black box implementations behaves like a tautology with respect to the required expectation $\theta$.
\end{proof}

On the other hand, the probabilistic partial design problem can be encoded with the following DSSAT formula
\begin{linenomath}
  \begin{align}\label{eq:dssat-pd}
     & \invR X,\invR Z,\forall Y,\exists T(D).\nonumber   \\
     & (Y \equiv E(X)) \rightarrow (F(X,Z,T)\equiv G(X)),
  \end{align}
\end{linenomath}
where the primary input variables are randomly quantified with probability $p_{x_i}$ of $x_i \in X$ to reflect their weights, and the error-triggering auxiliary input variables $Z$ are randomly quantified according to the pre-specified error rates of the erroneous gates in $C_F$. Notice that the above DSSAT formula takes advantage of the extension with universal quantifiers as discussed previously.


In approximate design, a circuit implementation may deviate from its specification by a certain extent.
The amount of deviation can be characterized in a way similar to the error probability calculation in probabilistic design.
For approximate partial design, the equivalence checking problem can be expressed by the DSSAT formula:
\begin{linenomath}
  \begin{align}\label{eq:dssat-ad}
     & \invR X,\forall Y,\exists T(D).\nonumber         \\
     & (Y \equiv E(X)) \rightarrow (F(X,T)\equiv G(X)),
  \end{align}
\end{linenomath}
which differs from Eq.~\eqref{eq:dssat-pd} only in requiring no auxiliary inputs.
The probabilities of the randomly quantified primary input variables are determined by the approximation criteria in measuring the deviation.
For example, when all the input assignments are of equal weight, the probabilities of the primary inputs are all set to 0.5.

We note that as the engineering change order (ECO) problem \cite{DATE20} heavily relies on partial design equivalence checking, the above DSSAT formulations provide fundamental bases for ECOs of probabilistic and approximate designs.
\section{Application: Modeling Dec-POMDP}\label{sec:dec-pomdp}
\begin{figure*}[t]
    \centering
    \input{fig/formulas}
    \caption{The formulas used to encode a Dec-POMDP $\mathcal{M}$.}\label{fig:formula}
    \vspace{-0.5cm}
\end{figure*}

In this section we demonstrate the descriptive power of DSSAT to model NEXPTIME-complete problems by constructing a polynomial-time reduction from Dec-POMDP to DSSAT.
Our reduction is an extension of that from POMDP to SSAT proposed in the previous work~\cite{SP19}.

In essence, given a Dec-POMDP $\mathcal{M}$, we will construct in polynomial time a DSSAT formula $\Phi$ such that there is a joint policy $\vec{\pi}$ for $\mathcal{M}$ with value $V(\vec{\pi})$ if and only if there is a set of Skolem functions $\mathcal{F}$ for $\Phi$ with satisfying probability $\Pr[\Phi|_{\mathcal{F}}]$, and $V(\vec{\pi})=\Pr[\Phi|_{\mathcal{F}}]$.

First we introduce the variables used in construction of the DSSAT formula and their domains.
To improve readability, we allow a variable $x$ to take values from a finite set $U=\{x_1,\ldots,x_K\}$~\cite{SP19}.
Under this setting, a randomized quantifier $\invR$ over variable $x$ specifies a distribution $\Pr[x\equiv x_i]$ for each $x_i\in U$.
We also define a scaled reward function:
\[
    r(s,\vec{a})=\frac{\rho(s,\vec{a})-\min_{s',\vec{a}'}\rho(s',\vec{a}')}{\sum_{s'',\vec{a}''}[\rho(s'',\vec{a}'')-\min_{s',\vec{a}'}\rho(s',\vec{a}')]}
\]
such that $r(s,\vec{a})$ forms a distribution over all pairs of $s$ and $\vec{a}$, i.e., $\forall s,\vec{a}.r(s,\vec{a})\geq 0$ and $\sum_{s,\vec{a}}r(s,\vec{a})=1$.
We will use the following variables:
\begin{itemize}
    \item $x_s^t\in S$: the state at stage $t$,
    \item $x_a^{i,t}\in A_i$: the action taken by Agent $i$ at stage $t$,
    \item $x_o^{i,t}\in O_i$: the observation received by Agent $i$ at stage $t$,
    \item $x_r^t\in S\times (A_1\times\ldots\times A_n)$: the reward earned at stage $t$,
    \item $x_T^t\in S$: transition distribution at stage $t$,
    \item $x_\Omega^t\in O_1\times\ldots\times O_n$: observation distribution at stage $t$,
    \item $x_p^t\in \mathbb{B}$: used to sum up rewards across stages.
\end{itemize}

\begin{figure*}[t]
    \centering
    \input{fig/derivation}
    \caption{The derivation of the induction case in the proof of Theorem~\ref{thm:reduction}.}
    \label{fig:derivation}
\end{figure*}

We represent elements in the sets $S$, $A_i$, and $O_i$ by integers, i.e., $S=\{0,1,\ldots,|S|-1\}$, etc., and use indices $s$, $a_i$, and $o_i$ to iterate through them, respectively.
On the other hand, a special treatment is required for variables $x_r^t$ and $x_\Omega^t$, as they range over Cartesian products of several sets.
We will give a unique number to an element in a product set as follows.
Consider $\vec{Q}=Q_1\times\ldots\times Q_n$, where each $Q_i$ is a finite set.
An element $\vec{q}=(q_1,\ldots,q_n)\in \vec{Q}$ is numbered by $N(q_1,\ldots,q_n)=\sum_{i=1}^n q_i(\prod_{j=1}^{i-1}|Q_j|)$.
In the following construction, variables $x_r^t$ and $x_\Omega^t$ will take values from the numbers given to the elements in $S\times\vec{A}$ and $\vec{O}$ by $N_r(s,\vec{a})$ and $N_\Omega(\vec{o})$, respectively.

We begin by constructing a DSSAT formula for a Dec-POMDP with $h=1$.
Under this setting, the derivation of the optimal joint policy is simplified to finding an action for each agent such that the expectation value of the reward function is maximized, i.e.,
\[
    \vec{a}^*=\argmax_{\vec{a}\in \vec{A}}\sum_{s\in S}\Delta_0(s)r(s,\vec{a})
\]
The DSSAT formula below encodes the above equation:
\[
    \invR x_s^0,\invR x_r^0,\exists x_a^{1,0}(D_{x_a^{1,0}}),\ldots,\exists x_a^{n,0}(D_{x_a^{n,0}}).\phi,
\]
where the distribution of $x_s^0$ follows $\Pr[x_s^0 \equiv s]=\Delta_0(s)$, the distribution of $x_r^0$ follows $\Pr[x_r^0 \equiv N_r(s,\vec{a})]=r(s,\vec{a})$, each $D_{x_a^{i,0}}=\emptyset$, and the matrix:
\[
    \phi=\bigwedge_{s\in S}\bigwedge_{\vec{a}\in\vec{A}}[x_s^0\equiv s\wedge\bigwedge_{i\in I} x_a^{i,0}\equiv a_i\rightarrow x_r^0\equiv N_r(s,\vec{a})].
\]
As the existentially quantified variables have no dependency on randomly quantified variable, the DSSAT formula is effectively an exist-random quantified SSAT formula.

For an arbitrary Dec-POMDP with $h>1$, we follow the two steps proposed in the previous work~\cite{SP19}, namely \textit{policy selection} and \textit{policy evaluation}, and adapt the policy selection step for the multi-agent setting in Dec-POMDP.

In the previous work~\cite{SP19}, an agent's policy selection is encoded by the following prefix (use Agent $i$ as an example):
\[
    \exists x_a^{i,0},\invR x_p^0,\invR x_o^{i,0},\ldots,\invR x_p^{h-2},\invR x_o^{i,h-2},\exists x_a^{i,h-1},\invR x_p^{h-1}.
\]
In the above quantification, variable $x_p^t$ is introduced to sum up rewards earned at different stages.
It takes values from $\mathbb{B}$, and follows a uniform distribution, i.e., $\Pr[x_p^t \equiv \top]=\Pr[x_p^t \equiv \bot]=0.5$.
When $x_p^t \equiv \bot$, the process is stopped and the reward at stage $t$ is earned; when $x_p^t \equiv \top$, the process is continued to stage $t+1$.
Note that variables $\{x_p^t\}$ are shared across all agents.
With the help of variable $x_p^t$, rewards earned at different stages are summed up with an equal weight $2^{-h}$.
Variable $x_o^{i,t}$ also follows a uniform distribution $\Pr[x_o^{i,t} \equiv o_i]=|O_i|^{-1}$, which scales the satisfying probability by $|O_i|^{-1}$ at each stage.
Therefore, we need to re-scale the satisfying probability accordingly in order to obtain the correct satisfying probability corresponding to the value of a joint policy.
The scaling factor will be derived in the proof of Theorem~\ref{thm:reduction}.

As the actions of the agents can only depend on their own observation history, for the selection of a joint policy it is not obvious how to combine the quantification, i.e., the selection of a policy, of each agent into a linearly ordered prefix required by SSAT, without suffering an exponential translation cost.
On the other hand, DSSAT allows to specify the dependency of an existentially quantified variable freely and is suitable to encode the selection of a joint policy.
In the prefix of the DSSAT formula, variable $x_a^{i,t}$ depends on $D_{x_a^{i,t}}=\{x_o^{i,0},\ldots,x_o^{i,t-1},x_p^0,\ldots,x_p^{t-1}\}$.

Next, the policy evaluation step is exactly the same as that in the previous work~\cite{SP19}.
The following quantification computes the value of a joint policy:
\begin{linenomath}
    \begin{align*}
        \invR x_s^t, \invR x_r^t, t=0,\ldots,h-1 \\
        \invR x_T^t, \invR x_\Omega^t, t=0,\ldots,h-2
    \end{align*}
\end{linenomath}
Variables $x_s^t$ follow a uniform distribution $\Pr[x_s^t \equiv s]=|S|^{-1}$ except for variable $x_s^0$, which follows the initial distribution specified by $\Pr[x_s^0 \equiv s]=\Delta_0(s)$; variables $x_r^t$ follow the distribution of the reward function $\Pr[x_r^t \equiv N_r(s,\vec{a})]=r(s,\vec{a})$; variables $x_T^t$ follow the state transition distribution $\Pr[x_{T_{s,\vec{a}}}^t \equiv s']=T(s,\vec{a},s')$; variables $x_\Omega^t$ follow the observation distribution $\Pr[x_{\Omega_{s',\vec{a}}}^t \equiv N_\Omega(\vec{o})]=\Omega(s',\vec{a},\vec{o})$.
Note that these variables encode the random mechanism of a Dec-POMDP and are hidden from agents.
That is, variables $x_a^{i,t}$ do not depend on the above variables.

\begin{figure*}[t]
    \centering
    \begin{tikzpicture}
    \begin{scope}
    \Tree [.$\exists x_a^{1,0}x_a^{2,0}$ 
            \edge node[auto=right]{$\vec{a}^0=(a_1^0,a_2^0)$}; [.$\invR x_p^0$ 
              \edge node[auto=right]{0.5}; [.$\invR x_o^{1,0}x_o^{2,0}$ 
              \edge node[auto=right]{$\frac{1}{|O_1\times O_2|}$}; [.$\exists x_a^{1,1}x_a^{2,1}$ 
              [.$\invR x_p^1$ 
              \edge node[auto=right]{0.5}; [.$\invR x_sx_rx_Tx_\Omega$ 
              \edge node[auto=right]{$\frac{1}{|S|}$}; [.$\Delta_0(s^0)r(s^0,\vec{a}^0)$ ] 
              [.... ]
              [.... ]
              [.... ]
              [.... ] ] 
              \edge node[auto=left]{0.5}; [.$0$ ] ] ] 
              [.$0\cdots$ ] 
              [.$\cdots 0$ ] ] 
              \edge node[auto=left]{0.5}; [.$\invR x_o^{1,0}x_o^{2,0}$ 
              \edge node[auto=right]{$\frac{1}{|O_1\times O_2|}$}; [.$\exists x_a^{1,1}x_a^{2,1}$ ]
              [.... ]
              \edge node[auto=left]{$\vec{o}^0=(o_1^0,o_2^0)$}; [.$\exists x_a^{1,1}x_a^{2,1}$ 
              \edge node[auto=right]{$\vec{a}^1=(a_1^1,a_2^1)$}; [.$\invR x_p^1$ 
              \edge node[auto=right]{0.5}; [.$\invR x_sx_rx_Tx_\Omega$ 
              \edge node[auto=right]{$\frac{1}{|S|}$}; [.$\Delta_0(s^0)T(s^0,\vec{a}^0,s^1)\Omega(s^1,\vec{a}^0,\vec{o}^0)r(s^1,\vec{a}^1)$ ]
              [.... ] 
              [.... ] ]
              \edge node[auto=left]{0.5}; [.$0$ ] ] ] ] ] ]
    \end{scope}
    \end{tikzpicture}
    \caption{The decision tree of a Dec-POMDP example with two agents and $h=2$.}
    \label{fig:example}
    \vspace{-0.5cm}
\end{figure*}

The formulas to encode $\mathcal{M}$ are listed in Figure~\ref{fig:formula}.
Formula~\eqref{eq:xp_next} encodes that when $x_p^t \equiv \bot$, i.e., the process is stopped, the observation $x_o^{i,t}$ and next state $x_s^{t+1}$ are set to a preserved value $0$, and $x_p^{t+1} \equiv \bot$.
Formula~\eqref{eq:xp_stop} ensures the process is stopped at the last stage.
Formula~\eqref{eq:xr_0} ensures the reward at the first stage is earned when the process is stopped, i.e., $x_p^0 \equiv \bot$.
Formula~\eqref{eq:xr_t} requires the reward at stage $t>0$ is earned when $x_p^{t-1} \equiv \top$ and $x_p^t \equiv \bot$.
Formula~\eqref{eq:x_T} encodes the transition distribution from state $s$ to state $s'$ given actions $\vec{a}$ are taken.
Formula~\eqref{eq:x_omega} encodes the observation distribution to receive observation $\vec{o}$ under the situation that state $s'$ is reached after actions $\vec{a}$ are taken.

\begin{theorem}\label{thm:reduction}
    The above reduction maps a Dec-POMDP $\mathcal{M}$ to a DSSAT formula $\Phi$, such that a joint policy $\vec{\pi}$ exists for $\mathcal{M}$ if and only if a set of Skolem functions $\mathcal{F}$ exists for $\Phi$, with $V(\vec{\pi})=\Pr[\Phi|_{\mathcal{F}}]$.
\end{theorem}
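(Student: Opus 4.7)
The plan is to establish a value-preserving correspondence between joint policies and Skolem function sets by arguing two things in turn: (i) there is a natural bijection between joint policies $\vec{\pi}$ of $\mathcal{M}$ and Skolem function sets $\mathcal{F}$ for $\Phi$, and (ii) under this bijection the satisfying probability $\Pr[\Phi|_{\mathcal{F}}]$ equals the policy value $V(\vec{\pi})$, proved by induction on the horizon $h$.

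For (i), the existential variable $x_a^{i,t}$ has dependency set $D_{x_a^{i,t}}=\{x_o^{i,0},\ldots,x_o^{i,t-1},x_p^0,\ldots,x_p^{t-1}\}$. Restricting any Skolem function $f_{x_a^{i,t}}$ to the subdomain where $x_p^0\equiv\cdots\equiv x_p^{t-1}\equiv\top$ (i.e., the process has not stopped) yields a mapping from Agent $i$'s observation history $\underline{o_i^{t-1}}$ to an action in $A_i$, which is precisely the local policy $\pi_i$ at stage $t$. Values of $f_{x_a^{i,t}}$ on inputs where some $x_p^{t'}\equiv\bot$ are irrelevant because Formulas~\eqref{eq:xp_next}–\eqref{eq:xr_t} force stopped branches to contribute no reward. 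Collecting these restrictions across agents and stages yields a joint policy $\vec{\pi}$; conversely, any $\vec{\pi}$ lifts to an $\mathcal{F}$ by defining $f_{x_a^{i,t}}$ to agree with $\pi_i$ on the all-$\top$ restriction and to output a default value elsewhere. A crucial point to verify is that the DSSAT notion of dependency exactly matches the Dec-POMDP notion of individual observation history; this is by construction because each $D_{x_a^{i,t}}$ contains only Agent $i$'s own observation variables.

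For (ii), the base case $h=1$ reduces to the exist-random SSAT formula explicitly given in the construction, whose satisfying probability is $\sum_{s\in S}\Delta_0(s)r(s,\vec{a})=V(\vec{\pi})$ after the affine rescaling between $r$ and $\rho$. For the inductive step, I would unfold $\Pr[\Phi|_{\mathcal{F}_{h+1}}]$ by summing over the randomly quantified variables and exploit the role of $x_p^t\in\mathbb{B}$ as a stopping indicator: the uniform distribution on $\{x_p^t\}$ partitions the probability mass into $h+1$ mutually exclusive stopping stages $\hat{t}$, each contributing the expected reward at stage $\hat{t}-1$ along the trajectory fixed by $\mathcal{F}$ and the other randomly quantified variables. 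Formulas~\eqref{eq:x_T} and~\eqref{eq:x_omega} force $x_T^t$ and $x_\Omega^t$ to distribute as $T$ and $\Omega$ along this trajectory, and Formula~\eqref{eq:xr_t} ensures only one reward is collected per trajectory. Isolating the stage-$0$ contribution leaves a sub-expression of exactly the same shape with horizon $h$, so the inductive hypothesis combined with the Bellman Equation~\eqref{eq:bellman} closes the recursion.

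The main obstacle is the scaling bookkeeping. As the derivation pattern in Figure~\ref{fig:derivation} makes clear, a naive expansion produces an accumulating factor $\kappa_{h+1}=2^{h+1}(|\vec{O}|\cdot|S|)^h$ arising from the uniform distributions on $x_p^t$, $x_o^{i,t}$, and the non-initial $x_s^t$. To obtain the unscaled equality $V(\vec{\pi})=\Pr[\Phi|_{\mathcal{F}}]$ asserted in the theorem rather than an equality only up to $\kappa_{h+1}$, the reduction must absorb this factor — for instance by adjoining an auxiliary block of randomly quantified "normalization" variables whose contribution to the satisfying probability equals $\kappa_{h+1}$, or by a pre-scaling of the reward distribution $r$. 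I would include this normalization as an explicit step of the reduction and verify that it preserves the polynomial-time bound and the bijection from (i). Once that correction is in place, the telescoping in the inductive step recovers the Bellman recursion exactly, yielding $V(\vec{\pi})=\Pr[\Phi|_{\mathcal{F}}]$.
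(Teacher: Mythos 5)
Your proposal follows the same route as the paper's proof: induction on the horizon, the identical base case computation, the same unfolding of $\Pr[\Phi|_{\mathcal{F}_{h+1}}]$ in which the uniform stopping variables $x_p^t$ partition the mass into $h+1$ mutually exclusive stopping stages, and the same telescoping that isolates the stage-$0$ contribution and closes the recursion via the Bellman equation. Your part (i) usefully makes explicit the policy-to-Skolem-function correspondence that the paper compresses into a single sentence, including the correct observation that Skolem outputs on already-stopped branches are unconstrained (after $x_p^t\equiv\bot$, Formula~\eqref{eq:xr_t} never triggers again, so those branches contribute a neutral factor). You also correctly detect the one real wrinkle: the derivation delivers $\Pr[\Phi|_{\mathcal{F}_{h+1}}]=\kappa_{h+1}^{-1}V(\vec{\pi}_{h+1})$ with $\kappa_{h+1}=2^{h+1}(|\vec{O}|\cdot|S|)^{h}$, not the literal equality asserted in the theorem; the paper acknowledges the factor (``we need to re-scale the satisfying probability'') but then writes the unscaled equality in its final line.

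However, your proposed repair cannot work as stated, and this is a genuine gap. By Eq.~\eqref{eq:dssat_sum}, $\Pr[\Phi|_{\mathcal{F}}]=\sum_{\alpha}\mathds{1}_{\phi|_{\mathcal{F}}}(\alpha)\,w(\alpha)\leq\sum_{\alpha}w(\alpha)=1$, and adjoining any block of randomly quantified ``normalization'' variables multiplies the weight of each branch by the total mass of the new block's consistent assignments, which is likewise at most $1$: no gadget can contribute a factor $\kappa_{h+1}>1$ to a satisfying probability. Pre-scaling the reward upward is equally barred, since $r(s,\vec{a})$ is the distribution of $x_r^t$ and must sum to $1$ --- that constraint is precisely why the paper normalizes $\rho$ to $r$ in the first place. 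Indeed, literal equality is impossible in general: the value with respect to the scaled reward can be as large as $h$ (e.g., when a single pair carries $r(s,\vec{a})=1$ and the policy dwells on it), while $\Pr[\Phi|_{\mathcal{F}}]\leq 1$ always. The only sound repair sits on the other side of the equation: state the correspondence as $V(\vec{\pi})=\kappa_h\Pr[\Phi|_{\mathcal{F}}]$ (equivalently, normalize the value function by $\kappa_h^{-1}$). Since $\kappa_h$ is an explicitly computable constant independent of the policy, this preserves everything the reduction needs --- a threshold query $V(\vec{\pi})\geq\theta$ becomes $\Pr[\Phi|_{\mathcal{F}}]\geq\kappa_h^{-1}\theta$, and maximizing $\mathcal{F}$ and $\vec{\pi}$ still correspond. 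With that restatement, your induction goes through verbatim.
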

\begin{proof}
    Given an arbitrary Dec-POMDP $\mathcal{M}$, a proof using mathematical induction over its planning horizon $h$ is as follows.

    For the base case $h=1$, to prove the ``only if'' direction, consider a joint policy $\vec{\pi}$ for $\mathcal{M}$ which specifies $\vec{a}=(a_1,\ldots,a_n)$ where agent $i$ will take action $a_i$. For this joint policy, the value is computed as $V(\vec{\pi})=\sum_{s \in S}\Delta_0(s)r(s,\vec{a})$. Based on $\vec{\pi}$, we construct a set of Skolem functions $\mathcal{F}$ where $x_a^{i,0}=a_i$ for each $i\in I$. To compute $\Pr[\Phi|_{\mathcal{F}}]$, we cofactor the matrix with $\mathcal{F}$ and arrive at the following CNF formula:
    \[
        \bigwedge_{s\in S}[x_s^0\neq s \vee x_r^0 \equiv N_r(s,\vec{a})],
    \]
    and the satisfying probability of $\Phi$ with respect to $\mathcal{F}$ is
    \begin{linenomath}
        \begin{align*}
            \Pr[\Phi|_{\mathcal{F}}] & =\sum_{s\in S}\Pr[x_s^0 \equiv s]\Pr[x_r^0 \equiv N_r(s,\vec{a})] \\
                                     & =\sum_{s\in S}\Delta_0(s)r(s,\vec{a})=V(\vec{\pi})
        \end{align*}
    \end{linenomath}

    Note in the above argument, only equalities are involved, and hence can be reversed to prove the ``if'' direction.

    For the induction case $h>1$, first assume that the statement holds. For a planning horizon of $h+1$, consider a joint policy $\vec{\pi}_{h+1}$ with value $V(\vec{\pi}_{h+1})$. Note that as a joint policy is a mapping from observation histories to actions, we can build a corresponding set of Skolem functions $\mathcal{F}_{h+1}$ to simulate joint policy $\vec{\pi}_{h+1}$ for the DSSAT formula. The derivation of satisfying probability with respect to $\mathcal{F}_{h+1}$ is shown in Figure~\ref{fig:derivation}. Note that to obtain the correct value of the joint policy, we need to re-scale the satisfying probability by a scaling factor $\kappa_{h+1}=2^{h+1}(|\vec{O}||S|)^{h}$.

    Once again, as only equalities are involved in the derivation in Figure~\ref{fig:derivation}, the ``if'' direction is also proved.

    As $\Pr[\Phi|_{\mathcal{F}_{h+1}}]=V(\vec{\pi}_{h+1})$, the theorem is proved according to the principle of mathematical induction.
\end{proof}

\subsection{Discussion}
Below we count the numbers of variables and clauses in the resulting DSSAT formula with respect to the input size of the given Dec-POMDP.
For one stage, there are $3+2(|I|+|S||\vec{A}|)$ variables, and therefore in total the number of variables is $O(h(|I|+|S||A|))$ asymptotically.
On the other hand, the number of clauses per stage is $2+|I|+|S||\vec{A}|+|S|^2|\vec{A}|+|S||\vec{A}||\vec{O}|$, and hence the total number of clauses is $O(h(|I|+|S||\vec{A}|(|S|+|\vec{O}|))$.
Overall, we show that the proposed reduction is polynomial-time with respect to the input size of the Dec-POMDP.

Below we demonstrate the reduction with an example.

\begin{example}
    Consider a Dec-POMDP with two agents and planning horizon $h=2$.
    Given a joint policy $(\pi_1,\pi_2)$ for Agent $1$ and Agent $2$, let the actions taken at $t=0$ be $\vec{a}^0=(a_1^0,a_2^0)$ and the actions taken at $t=1$ under certain observations $\vec{o}^0=(o_1^0,o_2^0)$ be $\vec{a}^1=(a_1^1,a_2^1)$.
    The value of this joint policy is computed by Eq.~\eqref{eq:bellman} as
    \begin{linenomath}
        \begin{align*}
            V(\pi) & =\sum_{s^0\in S}\Delta_0(s^0)[r(s^0,\vec{a}^0)                                                                  \\
                   & +\sum_{\vec{o}^0\in\vec{O}}\sum_{s^1\in S}T(s^0,\vec{a}^0,s^1)\Omega(s^1,\vec{a}^0,\vec{o}^0)r(s^1,\vec{a}^1)].
        \end{align*}
    \end{linenomath}

    The decision tree of the converted DSSAT formula is shown in Figure~\ref{fig:example}.
    At $t=0$, after taking actions $\vec{a}^0$, variable $x_p^0$ splits into two cases: when $x_p^0\equiv \bot$ (left branch), the expected reward $\Delta_0(s^0)r(s^0,\vec{a}^0)$ will be earned for $t=0$; on the other hand, when $x_p^0\equiv \top$ (right branch), observation $\vec{o}^0$ is received, based on which the agents will select their actions $\vec{a}^1$ at $t=1$.
    Again, variable $x_p^1$ will split into two cases, but this time $x_p^1$ is forced to be $\bot$ as it is the last stage.
    The expected reward $\Delta_0(s^0)T(s^0,\vec{a}^0,s^1)\Omega(s^1,\vec{a}^0,\vec{o}^0)r(s^1,\vec{a}^1)$ will be earned under the branch of $x_p^1\equiv\bot$ for $t=1$.
    Note that the randomized quantifiers over variables $x_p^t$, $x_s^t$, and $x_o^t$ will scale the satisfying probability by the factors labelled on the edges, respectively.
    Therefore, we have to re-scale the satisfying probability by $2^2|S||O_1\times O_2|$, which is predicted by the scaling factor $\kappa_{h}=2^h(|\vec{O}||S|)^{h-1}$ calculated in the proof of Theorem~\ref{thm:reduction}.
\end{example}
\section{Conclusions and Future Work}\label{sec:conclusions}
In this paper, we extended DQBF to its stochastic variant DSSAT and proved its NEXPTIME-completeness.
Compared to the PSPACE-complete SSAT, DSSAT is more powerful to succinctly model NEXPTIME-complete decision problems with uncertainty.
The new formalism can be useful in applications such as artificial intelligence and system design.
Specifically, we demonstrated the DSSAT formulation of the analysis to probabilistic/approximate partial design, and gave a polynomial-time reduction from the NEXPTIME-complete Dec-POMDP to DSSAT.
We envisage the potential broad applications of DSSAT and plan solver development for future work.
We note that recent developments of \textit{clausal abstraction} for QBF~\cite{JanotaM15,RabeT15} and DQBF~\cite{Tentrup19} might provide a promising framework for DSSAT solving.
Clausal abstraction has been lifted to SSAT~\cite{ChenHJ21}, and we are investigating its feasibility for DSSAT.

\section*{Acknowledgments}
The authors are grateful to Christoph Scholl, Ralf Wimmer, and Bernd Becker for valuable discussions motivating this work.
This work was supported in part by the Ministry of Science and Technology of Taiwan under Grant No.~108-2221-E-002-144-MY3, 108-2218-E-002-073, and 109-2224-E-002-008.
JHJ was supported in part by the Alexander von Humboldt Foundation during this work.

\bibliography{references}

\begin{thebibliography}{39}
\providecommand{\natexlab}[1]{#1}
\providecommand{\url}[1]{\texttt{#1}}
\providecommand{\urlprefix}{URL }
\expandafter\ifx\csname urlstyle\endcsname\relax
  \providecommand{\doi}[1]{doi:\discretionary{}{}{}#1}\else
  \providecommand{\doi}{doi:\discretionary{}{}{}\begingroup
  \urlstyle{rm}\Url}\fi

\bibitem[{Balabanov, Chiang, and Jiang(2014)}]{balabanov2014henkin}
Balabanov, V.; Chiang, H.-J.~K.; and Jiang, J.-H.~R. 2014.
\newblock {Henkin quantifiers and Boolean formulae: A certification perspective
  of DQBF}.
\newblock \emph{Theoretical Computer Science} 523: 86--100.

\bibitem[{Barrett and Tinelli(2018)}]{barrett2018satisfiability}
Barrett, C.; and Tinelli, C. 2018.
\newblock Satisfiability modulo theories.
\newblock \emph{Handbook of Model Checking} 305--343.

\bibitem[{B{\'e}rard et~al.(2013)B{\'e}rard, Bidoit, Finkel, Laroussinie,
  Petit, Petrucci, and Schnoebelen}]{berard2013systems}
B{\'e}rard, B.; Bidoit, M.; Finkel, A.; Laroussinie, F.; Petit, A.; Petrucci,
  L.; and Schnoebelen, P. 2013.
\newblock \emph{Systems and Software Verification: Model-checking Techniques
  and Tools}.
\newblock Springer Science \& Business Media.

\bibitem[{Bernstein et~al.(2002)Bernstein, Givan, Immerman, and
  Zilberstein}]{bernstein2002complexity}
Bernstein, D.~S.; Givan, R.; Immerman, N.; and Zilberstein, S. 2002.
\newblock The complexity of decentralized control of {Markov} decision
  processes.
\newblock \emph{Mathematics of Operations Research} 27(4): 819--840.

\bibitem[{Biere, Heule, and van Maaren(2009)}]{biere2009handbook}
Biere, A.; Heule, M.; and van Maaren, H. 2009.
\newblock \emph{Handbook of Satisfiability}.
\newblock IOS press.

\bibitem[{B{\"u}ning and Bubeck(2009)}]{buning2009theory}
B{\"u}ning, H.~K.; and Bubeck, U. 2009.
\newblock Theory of quantified {Boolean} formulas.
\newblock \emph{Handbook of Satisfiability} 185: 735--760.

\bibitem[{Chakrapani et~al.(2008)Chakrapani, George, Marr, Akgul, and
  Palem}]{ProbDesign}
Chakrapani, L. N.~B.; George, J.; Marr, B.; Akgul, B. E.~S.; and Palem, K.~V.
  2008.
\newblock Probabilistic design: A survey of probabilistic {CMOS} technology and
  future directions for terascale {IC} design.
\newblock In \emph{VLSI-SoC: Research Trends in VLSI and Systems on Chip},
  101--118.

\bibitem[{Chen, Huang, and Jiang(2021)}]{ChenHJ21}
Chen, P.-W.; Huang, Y.-C.; and Jiang, J.-H.~R. 2021.
\newblock A Sharp Leap from Quantified Boolean Formula to Stochastic Boolean
  Satisfiability Solving.
\newblock In \emph{Proceedings of the 35th AAAI Conference on Artificial
  Intelligence (AAAI)}.

\bibitem[{De~Moura and Bj{\o}rner(2011)}]{de2011satisfiability}
De~Moura, L.; and Bj{\o}rner, N. 2011.
\newblock {Satisfiability modulo theories: Introduction and applications}.
\newblock \emph{Communications of the ACM} 54(9): 69--77.

\bibitem[{Fremont, Rabe, and Seshia(2017)}]{fremont2017maximum}
Fremont, D.~J.; Rabe, M.~N.; and Seshia, S.~A. 2017.
\newblock Maximum model counting.
\newblock In \emph{Proceedings of the 31st AAAI Conference on Artificial
  Intelligence (AAAI)}, 3885--3892.

\bibitem[{Gitina et~al.(2013)Gitina, Reimer, Sauer, Wimmer, Scholl, and
  Becker}]{ICCD13}
Gitina, K.; Reimer, S.; Sauer, M.; Wimmer, R.; Scholl, C.; and Becker, B. 2013.
\newblock Equivalence checking of partial designs using dependency quantified
  {B}oolean formulae.
\newblock In \emph{Proceedings of the IEEE 31st International Conference on
  Computer Design (ICCD)}, 396--403.

\bibitem[{Giunchiglia, Narizzano, and Tacchella(2005)}]{qbflib}
Giunchiglia, E.; Narizzano, M.; and Tacchella, A. 2005.
\newblock {Quantified Boolean formulas satisfiability library (QBFLIB)}.
\newblock Online available: \url{www.qbflib.org}.

\bibitem[{Janota and Marques{-}Silva(2015)}]{JanotaM15}
Janota, M.; and Marques{-}Silva, J. 2015.
\newblock Solving {QBF} by clause selection.
\newblock In \emph{Proceedings of the 24th International Joint Conference on
  Artificial Intelligence (IJCAI)}, 325--331.

\bibitem[{Jhala and Majumdar(2009)}]{jhala2009software}
Jhala, R.; and Majumdar, R. 2009.
\newblock Software model checking.
\newblock \emph{ACM Computing Surveys} 41(4): 21:1--21:54.

\bibitem[{Jiang, Kravets, and Lee(2020)}]{DATE20}
Jiang, J.-H.~R.; Kravets, V.~N.; and Lee, N.-Z. 2020.
\newblock Engineering change order for combinational and sequential design
  rectification.
\newblock In \emph{Proceedings of the 23rd Design, Automation \& Test in Europe
  Conference \& Exhibition (DATE)}.

\bibitem[{Lee and Jiang(2018)}]{lee2018towards}
Lee, N.-Z.; and Jiang, J.-H.~R. 2018.
\newblock Towards formal evaluation and verification of probabilistic design.
\newblock \emph{IEEE Transactions on Computers} 67(8): 1202--1216.

\bibitem[{Lee, Wang, and Jiang(2017)}]{lee2017solving}
Lee, N.-Z.; Wang, Y.-S.; and Jiang, J.-H.~R. 2017.
\newblock Solving stochastic {Boolean} satisfiability under random-exist
  quantification.
\newblock In \emph{Proceedings of the 26th International Joint Conference on
  Artificial Intelligence (IJCAI)}, 688--694.

\bibitem[{Lee, Wang, and Jiang(2018)}]{lee2018solving}
Lee, N.-Z.; Wang, Y.-S.; and Jiang, J.-H.~R. 2018.
\newblock Solving exist-random quantified stochastic {Boolean} satisfiability
  via clause selection.
\newblock In \emph{Proceedings of the 27th International Joint Conference on
  Artificial Intelligence (IJCAI)}, 1339--1345.

\bibitem[{Littman, Majercik, and Pitassi(2001)}]{littman2001stochastic}
Littman, M.~L.; Majercik, S.~M.; and Pitassi, T. 2001.
\newblock Stochastic {Boolean} satisfiability.
\newblock \emph{Journal of Automated Reasoning} 27(3): 251--296.

\bibitem[{Majercik(2009)}]{majercik2009stochastic}
Majercik, S.~M. 2009.
\newblock Stochastic {Boolean} satisfiability.
\newblock \emph{Handbook of Satisfiability} 185: 887--925.

\bibitem[{Majercik and Boots(2005)}]{majercik2005dc}
Majercik, S.~M.; and Boots, B. 2005.
\newblock {DC-SSAT: A divide-and-conquer approach to solving stochastic
  satisfiability problems efficiently}.
\newblock In \emph{Proceedings of the 19th AAAI Conference on Artificial
  Intelligence (AAAI)}, 416--422.

\bibitem[{Majercik and Littman(1998)}]{majercik1998maxplan}
Majercik, S.~M.; and Littman, M.~L. 1998.
\newblock {MAXPLAN: A new approach to probabilistic planning}.
\newblock In \emph{Proceedings of the 4th International Conference on
  Artificial Intelligence Planning (AIPS)}, 86--93.

\bibitem[{Majercik and Littman(2003)}]{majercik2003contingent}
Majercik, S.~M.; and Littman, M.~L. 2003.
\newblock Contingent planning under uncertainty via stochastic satisfiability.
\newblock \emph{Artificial Intelligence} 147(1-2): 119--162.

\bibitem[{Marques-Silva and Sakallah(2000)}]{marques2000boolean}
Marques-Silva, J.~P.; and Sakallah, K.~A. 2000.
\newblock Boolean satisfiability in electronic design automation.
\newblock In \emph{Proceedings of the 37th Annual Design Automation Conference
  (DAC)}, 675--680.

\bibitem[{Nilsson(2014)}]{nilsson2014principles}
Nilsson, N.~J. 2014.
\newblock \emph{Principles of Artificial Intelligence}.
\newblock Morgan Kaufmann.

\bibitem[{Oliehoek, Amato et~al.(2016)}]{oliehoek2016concise}
Oliehoek, F.~A.; Amato, C.; et~al. 2016.
\newblock \emph{A Concise Introduction to Decentralized POMDPs}.
\newblock Springer.

\bibitem[{Oliehoek, Spaan, and Vlassis(2008)}]{oliehoek2008optimal}
Oliehoek, F.~A.; Spaan, M. T.~J.; and Vlassis, N. 2008.
\newblock Optimal and approximate {Q-value} functions for decentralized
  {POMDPs}.
\newblock \emph{Journal of Artificial Intelligence Research} 32: 289--353.

\bibitem[{Papadimitriou(1985)}]{papadimitriou1985games}
Papadimitriou, C.~H. 1985.
\newblock Games against nature.
\newblock \emph{Journal of Computer and System Sciences} 31(2): 288--301.

\bibitem[{Peterson, Reif, and Azhar(2001)}]{peterson2001lower}
Peterson, G.; Reif, J.; and Azhar, S. 2001.
\newblock Lower bounds for multiplayer noncooperative games of incomplete
  information.
\newblock \emph{Computers \& Mathematics with Applications} 41(7-8): 957--992.

\bibitem[{Peterson and Reif(1979)}]{peterson1979multiple}
Peterson, G.~L.; and Reif, J.~H. 1979.
\newblock Multiple-person alternation.
\newblock In \emph{Proceedings of the 20th IEEE Symposium on Foundations of
  Computer Science (FOCS)}, 348--363.

\bibitem[{Rabe and Tentrup(2015)}]{RabeT15}
Rabe, M.~N.; and Tentrup, L. 2015.
\newblock {CAQE:} {A} certifying {QBF} solver.
\newblock In \emph{Proceedings of the 15th Conference on Formal Methods in
  Computer-Aided Design (FMCAD)}, 136--143.

\bibitem[{Russell and Norvig(2016)}]{russell2016artificial}
Russell, S.~J.; and Norvig, P. 2016.
\newblock \emph{Artificial Intelligence: A Modern Approach}.
\newblock Pearson Education Limited.

\bibitem[{Salmon and Poupart(2019)}]{SP19}
Salmon, R.; and Poupart, P. 2019.
\newblock On the relationship between stochastic satisfiability and partially
  observable {Markov} decision processes.
\newblock In \emph{Proceedings of the 35th Conference on Uncertainty in
  Artificial Intelligence (UAI)}, 407:1--407:11.

\bibitem[{Scholl and Wimmer(2018)}]{scholl2018dependency}
Scholl, C.; and Wimmer, R. 2018.
\newblock Dependency quantified {Boolean} formulas: An overview of solution
  methods and applications.
\newblock In \emph{Proceedings of the 21st International Conference on Theory
  and Applications of Satisfiability Testing (SAT)}, 3--16.

\bibitem[{Sinha, Mishchenko, and Brayton(2002)}]{Sinha:2002}
Sinha, S.; Mishchenko, A.; and Brayton, R.~K. 2002.
\newblock Topologically constrained logic synthesis.
\newblock In \emph{Proceedings of the 21st IEEE/ACM International Conference on
  Computer Aided Design (ICCAD)}, 679--686.

\bibitem[{Stockmeyer and Meyer(1973)}]{stockmeyer1973word}
Stockmeyer, L.~J.; and Meyer, A.~R. 1973.
\newblock Word problems requiring exponential time.
\newblock In \emph{Proceedings of the 5th Annual ACM Symposium on Theory of
  Computing (STOC)}, 1--9.

\bibitem[{Tentrup and Rabe(2019)}]{Tentrup19}
Tentrup, L.; and Rabe, M.~N. 2019.
\newblock Clausal abstraction for {DQBF}.
\newblock In \emph{Proceedings of the 22nd International Conference on Theory
  and Applications of Satisfiability Testing (SAT)}, 388--405.

\bibitem[{Venkatesan et~al.(2011)Venkatesan, Agarwal, Roy, and
  Raghunathan}]{ApproxDesign}
Venkatesan, R.; Agarwal, A.; Roy, K.; and Raghunathan, A. 2011.
\newblock MACACO: Modeling and analysis of circuits for approximate computing.
\newblock In \emph{Proceedings of the 30th IEEE/ACM International Conference on
  Computer Aided Design (ICCAD)}, 667--673.

\bibitem[{Wang, Chang, and Cheng(2009)}]{wang2009electronic}
Wang, L.-T.; Chang, Y.-W.; and Cheng, K.-T.~T. 2009.
\newblock \emph{Electronic Design Automation: Synthesis, Verification, and
  Test}.
\newblock Morgan Kaufmann.

\end{thebibliography}
\end{document}